\documentclass[11pt,reqno, final]{amsart}

\usepackage{color}
\usepackage[colorlinks=false]{hyperref}
\usepackage{amsmath, amssymb, amsthm}
\usepackage{mathrsfs}
\usepackage{mathtools}
\usepackage[noabbrev,capitalize,nameinlink]{cleveref}
\crefname{equation}{}{}
\usepackage{fullpage}
\usepackage{graphics}
\usepackage{pifont}
\usepackage{tikz}
\usepackage{bbm}
\usepackage[T1]{fontenc}
\usetikzlibrary{arrows.meta}

\usepackage{environ}
\usepackage{framed}
\usepackage{url}
\usepackage[linesnumbered,ruled,vlined]{algorithm2e}
\usepackage[noend]{algpseudocode}
\usepackage[labelfont=bf]{caption}
\usepackage{framed}
\usepackage[framemethod=tikz]{mdframed}
\usepackage{appendix}
\usepackage{graphicx}
\usepackage[textsize=tiny]{todonotes}
\usepackage{tcolorbox}
\allowdisplaybreaks

\crefname{algocf}{Algorithm}{Algorithms}

\crefname{equation}{}{} 
\AtBeginEnvironment{appendices}{\crefalias{section}{appendix}} 

\usepackage[color]{showkeys} 

\colorlet{refkey}{orange!20}
\colorlet{labelkey}{blue!30}

\crefname{algocf}{Algorithm}{Algorithms}

\numberwithin{equation}{section}
\newtheorem{theorem}{Theorem}[section]

\newtheorem{lemma}[theorem]{Lemma}

\crefname{claim}{Claim}{Claims}

\newtheorem{corollary}[theorem]{Corollary}

\newtheorem*{question*}{Question}

\theoremstyle{definition}
\newtheorem{definition}[theorem]{Definition}

\newtheorem*{definition*}{Definition}

\theoremstyle{remark}
\newtheorem*{remark}{Remark}


\newcommand{\snorm}[1]{\lVert#1\rVert}

\newcommand{\sang}[1]{\langle #1 \rangle}

\newcommand{\mb}{\mathbb}

\newcommand{\mc}{\mathcal}
\newcommand{\mf}{\mathfrak}

\newcommand{\on}{\operatorname}
\newcommand{\tsc}{\textsc}

\allowdisplaybreaks

\title{A Gaussian fixed point random walk}

\author[A1]{Yang P. Liu}
\address{Department of Mathematics, Stanford University,
Stanford, CA 94305, USA}
\email{yangpliu@stanford.edu}

\author[A2]{Ashwin Sah}
\author[A3]{Mehtaab Sawhney}
\address{Department of Mathematics, Massachusetts Institute of Technology, Cambridge, MA 02139, USA}
\email{\{asah,msawhney\}@mit.edu}

\begin{document}
\begin{abstract}
In this note, we design a discrete random walk on the real line which takes steps $0, \pm 1$ (and one with steps in $\{\pm 1, 2\}$) where at least $96\%$ of the signs are $\pm 1$ in expectation, and which has $\mc{N}(0,1)$ as a stationary distribution. As an immediate corollary, we obtain an online version of Banaszczyk's discrepancy result for partial colorings and $\pm 1, 2$ signings. Additionally, we recover linear time algorithms for logarithmic bounds for the Koml\'{o}s conjecture in an oblivious online setting.
\end{abstract}

\maketitle

\section{Introduction}\label{sec:introduction}
In the (oblivious) online vector discrepancy problem an adversary fixes vectors $\{v_i\}_{i\in [t]}$ in advance and the objective is to assign signs $\epsilon_i\in \{-1,1\}$ based only on vectors $v_1,\ldots,v_i$ to maintain that $\snorm{\sum_{i\le t'} \epsilon_iv_i}_\infty$ is small at all times $t' \in [t]$. 
Vector balancing includes a number of different problems in discrepancy theory including Spencer's \cite{Spe85} work on set discrepancy. Spencer's ``six standard deviations suffice'' result states that given vectors $v_1,\ldots,v_n\in \{0,1\}^n$ there exists a $\pm 1$-signing such that $\snorm{\sum_{i\le n} \epsilon_iv_i}_{\infty}\le 6\sqrt{n}$. Conjecturally, however, the restriction to $\{0,1\}^n$ vectors can be relaxed to a norm condition. In particular, the Koml\'{o}s conjecture states that given $v_1,\ldots,v_t$, each of at most unit length, there exists a sequence of signs $\epsilon_1,\ldots,\epsilon_t$ such that $\snorm{\sum_{i\le t} \epsilon_iv_i}_{\infty} = O(1)$. Despite substantial effort, the Koml\'{o}s conjecture is still open and the best known bounds due to Banaszczyk \cite{Ban98} give the existence of a sequence of signs so that $\epsilon_1,\ldots,\epsilon_t$ such that $\snorm{\sum_{i\le t} \epsilon_iv_i}_{\infty} = O(\sqrt{\min(\log n,\log t}))$. However, these original proofs were by their nature non-algorithmic.

More recent research in theoretical computer science has focused on developing algorithmic versions of these results starting with the Bansal \cite{Ban10} and Lovett-Meka \cite{LM15} polynomial-time algorithms for Spencer's \cite{Spe85} ``six standard deviations suffice''. Since then, there have been several other constructive discrepancy minimization algorithms \cite{Rot17,ES18,BDG16,BG17,BDGL18,DNTT18}. Notably for our purposes, Bansal, Dadush, Garg \cite{BDG16} and Bansal, Dadush, Garg, Lovett \cite{BDGL18} have made the work of Banaszczyk \cite{Ban98} algorithmic. However in all cases these algorithms require all vectors to be known at the start and hence do not extend to the online setting.

In the online setting, significant work has been devoted to the case where $v_i$ are drawn from a fixed (and known) distribution $\mf{p}$ supported on $[-1,1]^n$. In the setting where $\mf{p}$ is uniform on $[-1,1]^n$, Bansal and Spencer \cite{BS19} showed one can maintain $\max_{t'\le t}\snorm{\sum_{i\le t'} \epsilon_iv_i}_\infty\le O(\sqrt{n} \log t)$. In the more general setting where $\mf{p}$ is a general distribution supported on $[-1,1]^n$, Aru, Narayanan, Scott, and Venkatesan \cite{ANSV18} achieved a bound of $O_n(\sqrt{\log t})$ (where the implicit dependence on $n$ is super-exponential) and Bansal, Jiang, Meka, Singla, and Sinha \cite{BJMSS20} (building on work of Bansal, Jiang, Singla, and Sinha \cite{BJSS20}) achieved an $\ell_{\infty}$ guarantee of $O(\sqrt{n}\log(nt)^4)$.

In this work we focus on the online setting where the only guarantee is $\snorm{v_i}_2\le 1$. The only previous work in this oblivious online setting is the following result of Alweiss, the first author, and the third author \cite{ALS20}.

\begin{theorem}[{\cite[Theorem~1.1]{ALS20}}]
\label{thm:balance}
For any vectors $v_1, v_2, \cdots, v_t \in \mb{R}^n$ with $\|v_i\|_2 \le 1$ for all $i \in [t]$, there exists an online algorithm $\tsc{Balance}(v_1, \cdots, v_t, \delta)$ which maintains $\snorm{\sum_{i\le t'} \epsilon_iv_i}_\infty = O\left(\log(nt/\delta)\right)$ for all $t' \in [t]$ with probability at least $1-\delta$. 
\end{theorem}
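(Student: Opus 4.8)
I would prove this by the \emph{self-balancing walk}. Put $w_0=0$ and fix $c=\Theta(\log(nt/\delta))$. When the $t'$-th vector $v_{t'}$ arrives and $w_{t'-1}=\sum_{i<t'}\epsilon_i v_i$ has already been committed, set
\[
p_{t'}=\min\!\Big(1,\ \max\!\Big(0,\ \tfrac12\Big(1-\tfrac{\ang{w_{t'-1},v_{t'}}}{c}\Big)\Big)\Big),
\]
output $\epsilon_{t'}=+1$ with probability $p_{t'}$ and $\epsilon_{t'}=-1$ otherwise, and update $w_{t'}=w_{t'-1}+\epsilon_{t'}v_{t'}$; the clip merely keeps $p_{t'}\in[0,1]$, is active only when $\abs{\ang{w_{t'-1},v_{t'}}}>c$, and then only strengthens the bias. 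Writing $\mc{F}_{t'-1}$ for the history, $\mb{E}[\epsilon_{t'}\mid\mc{F}_{t'-1}]=-\ang{w_{t'-1},v_{t'}}/c$ away from the clip (and $\mb{E}[\epsilon_{t'}\mid\mc{F}_{t'-1}]\cdot\ang{w_{t'-1},v_{t'}}\le 0$ always): each step is biased to shrink the component of the running sum along the incoming vector. As a sanity check, if the adversary cycles $v_{t'}$ through $e_1,\dots,e_n$, coordinate $j$ performs the mean-reverting walk $X\mapsto X+\epsilon$ with $\mb{E}[\epsilon]=-X/c$, which equilibrates at $\abs{X}\asymp\sqrt c$ with sub-Gaussian tails, so $\abs{X}=O(\log(nt/\delta))$ throughout; the theorem asserts this is essentially worst case. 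Obliviousness is genuinely used: were the adversary allowed to choose $v_{t'}\perp w_{t'-1}$ at every step, $\snorm{w_{t'}}_2^2$ would grow like $t$ and $\snorm{w_{t'}}_\infty$ like $\sqrt{t/n}$. Outputting the $\epsilon_{t'}$, it remains to show $\snorm{w_{t'}}_\infty=O(c)$ for all $t'\le t$ with probability $\ge 1-\delta$; each step costs one inner product and one coin, for total time $O(nt)$.

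The tool is an exponential-moment potential. Fix a small absolute constant $\lambda>0$, let $\Phi_{t'}:=\sum_{j=1}^n\cosh(\lambda w_{t'}(j))$, and analyse the stopped process $\Phi_{t'\wedge\tau}$ with $\tau:=\min\{t':\snorm{w_{t'}}_\infty>Kc\}$ for a constant $K$ to be chosen. Since $\Phi_{t'}\ge\cosh(\lambda\snorm{w_{t'}}_\infty)$, if one establishes $\mb{E}[\Phi_{t'\wedge\tau}]\le P(nt/\delta)$ for all $t'\le t$, where $P$ is a fixed polynomial of degree $O(\lambda^2)$, then — because $\tau\le t$ forces $\Phi_\tau\ge\cosh(\lambda Kc)=(nt/\delta)^{\Theta(\lambda K)}$ — taking $K$ a large enough multiple of $\lambda$ and applying Markov's inequality to the stopped process at time $t$ yields $\Pr[\tau\le t]\le\delta$, i.e.\ $\snorm{w_{t'}}_\infty\le Kc=O(\log(nt/\delta))$ for all $t'\le t$. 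The bound on $\mb{E}[\Phi_{t'\wedge\tau}]$ would come from summing a one-step estimate: using $\cosh(a+\epsilon b)=\cosh a\cosh b+\epsilon\sinh a\sinh b$ for $\epsilon\in\{\pm1\}$ and the update rule,
\begin{align*}
\mb{E}[\Phi_{t'}\mid\mc{F}_{t'-1}] &= \Phi_{t'-1}+\underbrace{\sum_j\cosh(\lambda w_{t'-1}(j))\big(\cosh(\lambda v_{t'}(j))-1\big)}_{\text{``spreading''}}\\
&\qquad -\ \frac{\ang{w_{t'-1},v_{t'}}}{c}\underbrace{\sum_j\sinh(\lambda w_{t'-1}(j))\sinh(\lambda v_{t'}(j))}_{\text{``mean-reversion''}},
\end{align*}
with $\le$ in place of the last term when the clip fires.

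The main obstacle is exactly this one-step comparison: controlling the cumulative effect of the spreading term by the mean-reversion term, uniformly over all unit-norm $v_{t'}$ and all reachable $w_{t'-1}$ with $\snorm{w_{t'-1}}_\infty\le Kc$. The tension is that the spreading term is coordinatewise — largest precisely when $v_{t'}$ charges coordinates where $w_{t'-1}$ is already near the threshold, and $\cosh(\lambda Kc)$ is then enormous — whereas the mean-reversion term depends on the \emph{single} scalar $\ang{w_{t'-1},v_{t'}}=\sum_k w_{t'-1}(k)v_{t'}(k)$, which can suffer cancellation between coordinates. One must therefore show that whenever $v_{t'}$ spreads mass onto large coordinates, the induced $\ang{w_{t'-1},v_{t'}}$ is, after accounting for cancellation, large enough that the drift $-\ang{w_{t'-1},v_{t'}}^2/c$ more than compensates — essentially a lower bound on $\ang{w_{t'-1},v_{t'}}\cdot\sum_j\sinh(\lambda w_{t'-1}(j))\sinh(\lambda v_{t'}(j))$ in terms of $\sum_j\cosh(\lambda w_{t'-1}(j))v_{t'}(j)^2$, which is not pointwise true and needs genuine care (likely a refined potential that heavily charges only coordinates near the threshold, or a convexity/rearrangement argument, together with a case split on the clip). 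Once that per-step estimate is in hand, everything else is as above. (Equivalently one can fix a single unit test vector $u$, bound $\mb{E}[\cosh(\lambda\ang{w_{t'},u})]$ by the same one-step inequality — here the troublesome cross-term is $\ang{w_{t'-1},v_{t'}}\cdot\ang{w_{t'-1},u}$ mediated by the component of $v_{t'}$ orthogonal to $u$ — and take $u=e_1,\dots,e_n$ with a union bound; the core difficulty is identical, which is why tracking all coordinates at once via $\sum_j\cosh(\lambda w_{t'}(j))$ is the natural choice.)
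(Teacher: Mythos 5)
There is a genuine gap. The algorithm you describe is the self-balancing walk of \cite{ALS20}, and your outline correctly reduces everything to a single one-step inequality: that the ``spreading'' term $\sum_j\cosh(\lambda w_{t'-1}(j))(\cosh(\lambda v_{t'}(j))-1)$ is dominated by the mean-reversion term $\tfrac{\ang{w_{t'-1},v_{t'}}}{c}\sum_j\sinh(\lambda w_{t'-1}(j))\sinh(\lambda v_{t'}(j))$. But you then state yourself that this ``is not pointwise true and needs genuine care,'' and you do not supply the argument; that estimate is the entire mathematical content of the theorem. It is in fact false per step for this potential: take $w_{t'-1}=(M,-M,0,\dots,0)$ and $v_{t'}=(1/\sqrt2,1/\sqrt2,0,\dots,0)$, so that $\ang{w_{t'-1},v_{t'}}=0$ and the mean-reversion term vanishes while the spreading term is of order $\lambda^2\cosh(\lambda M)$, i.e.\ $\Phi$ strictly increases in expectation. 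Dropping the drift term altogether only gives $\mb{E}[\Phi_{t}]\le n\,e^{\Theta(\lambda^2)t}$, which is vacuous for large $t$. So the supermartingale plan needs a genuinely new idea (a refined potential, or the stochastic-domination/coupling induction that \cite{ALS20} actually uses in place of a potential argument), and as written the proposal is a correct algorithm plus a plan whose key lemma is missing.

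For comparison, the paper's proof avoids any one-step comparison of this kind. It initializes $w_0\sim\mc{N}(0,\sigma^2 I_n)$ with $\sigma=\sqrt{\log(t/\delta)}$ and designs the step distribution itself: a $\{0,\pm1\}$-valued jump along $v_i$ whose transition probabilities $p_\sigma$, $r_\sigma$ are alternating theta series chosen (via the telescoping identity and the Jacobi triple product in \cref{lem:validity}) so that $\mc{N}(0,\sigma^2)|_{f+\mb{Z}}$ is \emph{exactly} stationary. Hence $w_i\sim\mc{N}(0,\sigma^2 I_n)$ for every $i$, the bound $2\sqrt{2\log(t/\delta)\log(2nt/\delta)}=O(\log(nt/\delta))$ on $w_\ell-w_0$ is an immediate union bound over Gaussian tails, and the only remaining check is that the stay-put probability satisfies $r_\sigma(f)\le e^{-\sigma^2}=\delta/t$, so that with probability $1-\delta$ every vector actually receives a $\pm1$ sign. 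If you want to complete your route you must prove the domination lemma of \cite{ALS20} or repair the potential; adopting the exact-stationarity viewpoint dissolves the difficulty entirely.
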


The proof in \cite{ALS20} relies on a coupling procedure which compares the distribution of $\sum_{i\le t}\epsilon_i v_i$ to a Gaussian at each stage via a stochastic domination argument and then deduces the necessary tail bounds. In this work, we recover \cref{thm:balance} (in fact with a slightly improved dependence) as well as the following corollary.

\begin{corollary}\label{cor:balance-new}
For any vectors $v_1, v_2, \cdots, v_t \in \mb{R}^n$ with $\|v_i\|_2 \le 1$ for all $i \in [t]$, there exists an online algorithm which assigns $\epsilon_i\in \{\pm 1, 2\}$ and maintains $\snorm{\sum_{i\le t'} \epsilon_iv_i}_\infty = O(\sqrt{\log(nt/\delta)})$ for all $t' \in [t]$ with probability at least $1-\delta$. 
\end{corollary}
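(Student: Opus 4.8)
The plan is to reduce \cref{cor:balance-new} to the construction of a one-dimensional random walk with increments in $\set{+1,-1,2}$ that preserves domination by a fixed Gaussian. Concretely, I would first build, for every step size $\lambda\in(0,1]$, a Markov kernel $K_\lambda$ on $\mb{R}$ of the form $x\mapsto x+\lambda\,\epsilon(x)$, where $\epsilon(x)\in\set{+1,-1,2}$ is a randomized function of $x$, enjoying two properties: (i) for an absolute constant $\sigma^2$, $K_\lambda$ preserves the family of laws dominated by $\mc{N}(0,\sigma^2)$, i.e.\ whenever a real random variable $X$ satisfies $\Pr[\abs{X}>r]\le\Pr[\abs{Z}>r]$ for all $r\ge 0$ with $Z\sim\mc{N}(0,\sigma^2)$, then $X+\lambda\,\epsilon(X)$ satisfies the same; and (ii) the step equals $2$ at most a $1/25$ fraction of the time (in expectation). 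Property (i) says in particular that $\mc{N}(0,\sigma^2)$ is a fixed point of each $K_\lambda$. The precise constant in (ii) is immaterial for the corollary; what is essential is that replacing a lazy $0$-step by a \emph{rare} step of size $2$ is exactly what makes $\sigma^2$ a genuine (finite) Gaussian variance rather than a sub-exponential scale, and hence what upgrades the rate from the $\log$ of \cref{thm:balance} to $\sqrt{\log}$. Producing $\epsilon(\cdot)$ so that (i) and (ii) hold at once is the heart of the matter and, I expect, the only real obstacle; everything below is soft.

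Given the family $\set{K_\lambda}$, the online algorithm is the natural one: maintain $w_{t'}=\sum_{i\le t'}\epsilon_iv_i$ with $w_0=0$, and on receiving $v_i$ with $\snorm{v_i}_2\le 1$ (setting $\epsilon_i=1$ if $v_i=0$), write $\hat v_i=v_i/\snorm{v_i}_2$ and $\lambda_i=\snorm{v_i}_2\in(0,1]$, and draw $\epsilon_i\in\set{+1,-1,2}$ from the randomized rule $\epsilon(\cdot)$ of $K_{\lambda_i}$ evaluated at the scalar $\ang{w_{i-1},\hat v_i}$, using fresh independent coins. This is legitimately online, and since $\ang{w_{i-1}+\epsilon_iv_i,\hat v_i}=\ang{w_{i-1},\hat v_i}+\epsilon_i\lambda_i$, the $\hat v_i$-component of the running sum evolves by exactly one step of $K_{\lambda_i}$, while that is the only component consulted when $\epsilon_i$ is chosen.

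The invariant I would propagate is that the marginal law of $w_{t'}$ is dominated by $\mc{N}(0,C\,\mr{I}_n)$ in the sense of Banaszczyk---$\Pr[w_{t'}\in A]\ge\gamma_C(A)$ for every symmetric convex body $A\subseteq\mb{R}^n$, where $\gamma_C=\mc{N}(0,C\,\mr{I}_n)$ and $C=\sigma^2$. The base case $w_0=0$ holds because $\delta_0(A)=1\ge\gamma_C(A)$ for all such $A$. For the inductive step it suffices to show that the $\mb{R}^n$-kernel ``apply $K_{\lambda_i}$ along the direction $\hat v_i$'' preserves Banaszczyk domination by $\gamma_C$; combined with the hypothesis on $w_{i-1}$ this yields it for $w_i$. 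By rotational invariance of $\gamma_C$ we may take $\hat v_i=e_1$, so the kernel is $K_{\lambda_i}\otimes\mathrm{id}_{\mb{R}^{n-1}}$, and for an arbitrary unit $\theta=ae_1+b\theta'$ (with $\theta'\perp e_1$, $a^2+b^2=1$) the update reads $\ang{w_i,\theta}=a\bigl(\ang{w_{i-1},e_1}+\epsilon_i\lambda_i\bigr)+b\ang{w_{i-1},\theta'}$: the first coordinate runs the one-dimensional walk and the rest are frozen, which is precisely the configuration property (i) is designed to control. Lifting property (i) from $\mb{R}$ to the statement that $K_{\lambda_i}\otimes\mathrm{id}_{\mb{R}^{n-1}}$ preserves $\gamma_C$-domination is the one place below (i) requiring care; I would handle it via the Gaussian-coupling argument of \cite{ALS20}, the key point being that $\epsilon_i$ interacts with $w_{i-1}$ only through the single scalar $\ang{w_{i-1},\hat v_i}$, so the $n$-dimensional update genuinely factors through the one-dimensional walk.

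With the invariant established, \cref{cor:balance-new} follows at once. Taking $A$ to be the cube $\set{x:\snorm{x}_\infty\le r}$ gives, for every $t'\le t$,
\[
\Pr\bigl[\snorm{w_{t'}}_\infty>r\bigr]\le 1-\gamma_C\bigl(\set{x:\snorm{x}_\infty\le r}\bigr)\le n\exp\bigl(-r^2/(2C)\bigr),
\]
so choosing $r=\sqrt{2C\log(nt/\delta)}$ and union-bounding over $t'\in[t]$ shows $\snorm{\sum_{i\le t'}\epsilon_iv_i}_\infty=O(\sqrt{\log(nt/\delta)})$ for all $t'\in[t]$ with probability at least $1-\delta$, with $\epsilon_i\in\set{\pm1,2}$ by construction. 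To summarize: the entire content is the one-dimensional construction of a $\set{\pm1,2}$-valued walk meeting (i) and (ii), and the passage to the vector setting is a coupling-plus-union-bound routine in the spirit of \cite{ALS20}.
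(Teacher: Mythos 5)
There are two genuine gaps here. The first is that the entire mathematical content of the result --- the construction of a one-dimensional walk with steps in $\{\pm 1,2\}$ enjoying the required Gaussian invariance --- is deferred as a black box. You correctly flag it as ``the heart of the matter,'' but without it nothing is proved. The paper builds this walk explicitly on each coset $f+\mb{Z}$ using theta series: transition probabilities $p_\sigma(x)=\sum_{j\ge1}(-1)^{j-1}\exp(-(j^2+2xj)/(2\sigma^2))$ together with a residual mass $r_\sigma(f)$ controlled via the Jacobi triple product, and the $+2$ step is made possible only after establishing the nontrivial inequality $p_\sigma(1+f)\ge r_\sigma(f)e^{(2f+1)/(2\sigma^2)}$ for $\sigma\ge1$ (\cref{lem:inequality}, proved in the appendix). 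None of this is routine.

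The second gap is more serious: the reduction you build around the black box cannot work as stated. Property (i) --- that $K_\lambda$ preserves the class of laws whose two-sided tails are dominated by those of $\mc{N}(0,\sigma^2)$ --- is unachievable for \emph{any} walk whose increments lie in $\{\pm 1,2\}$. Your base case is $w_0=0$, i.e.\ the point mass at the origin, which is dominated; but after a single step such a walk puts all of its mass at distance at least $\lambda$ from the origin, so $\Pr[|X_1|>\lambda/2]=1>\Pr[|Z|>\lambda/2]$ and domination already fails. (This is precisely why the domination/coupling route of \cite{ALS20} with $\pm1$ steps only yields subexponential tails and hence the $\log$ bound of \cref{thm:balance}.) The paper's resolution is the second key idea you are missing: the walk is engineered so that $\mc{N}(0,\sigma^2)$ is an \emph{exact} stationary distribution (\cref{lem:validity-II,lem:gaussian-ramanujan}), the algorithm is initialized at a random Gaussian point $w_0\sim\mc{N}(0,\sigma^2 I_n)$ rather than at $0$ so that every iterate $w_\ell$ is exactly $\mc{N}(0,\sigma^2 I_n)$, and the reported discrepancy vector is $w_{t'}-w_0$. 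The $O(\sqrt{\log(nt/\delta)})$ bound then follows from elementary Gaussian tail estimates and a union bound over the coordinates of $w_0,\dots,w_t$ (at the cost of a factor $2$ from the two endpoints), with no coupling or domination argument needed. Your concluding union-bound paragraph is essentially correct, but it must be applied to the exact Gaussian laws of $w_\ell$ and $w_0$ separately, not to a domination invariant for $w_{t'}$ started at $0$.
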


This result essentially recovers the best known bound on the Koml\'os conjecture due to Banaszczyk \cite{Ban98} in an online algorithmic fashion, with the slight defect of requiring a $+2$-signing option. Furthermore due to the online nature of the algorithm, the algorithm will run in essentially input-sparsity time which is substantially faster than the Gram-Schmidt walk \cite{BDGL18} which gives an algorithmic proof of the result of \cite{Ban98} (without the defect of requiring a $+2$-signing option).

Our results are based on the observation that there exists Markov chains on $\mb{R}$ with transition steps of $0,\pm 1$ or $\pm 1, 2$ such that $\mc{N}(0,1)$ is a stationary distribution (as well as $\mc{N}(0,\sigma^2)$ for appropriate values of $\sigma$). Note that no such walk exists for $\pm 1$ steps as $\sum_{n\in \mb{Z}} (-1)^ne^{-n^2/2}\neq 0$ and therefore any $\pm 1$ walk fails the natural ``parity constraint'' that the total mass on even integers is mapped to the odd integers and vice versa under one step.

The remainder of the paper is organized as follows. In \cref{sec:0-walk} we construct the required Markov chain on $\mb{R}$ with transition steps of $0,\pm 1$ such that $\mc{N}(0,\sigma^2)$ is a stationary distribution. In \cref{sec:2-walk} we extend this to a walk with transition steps of $\pm 1, 2$ as long as $\sigma\ge 1$. Finally, in \cref{sec:algorithms} we deduce the various algorithmic consequences.

\subsection{Notation}
Throughout this paper let $\mc{N}(\mu,\sigma^2)$ denote the Gaussian random variable with mean $\mu$ and variance $\sigma^2$. Furthermore, let $\on{nnz}(\{v_i\}_{i\in S})$ denote the total number of non-zero entries of the vectors $\{v_i\}_{i\in S}$. 

\section{\texorpdfstring{$0, \pm 1$}{0, +-1} walk}\label{sec:0-walk}
\begin{definition}\label{def:jacobi-walk}
Given $\sigma > 0$ and $f\in[-1/2,1/2]$, consider the following random walk on $f+\mb{Z}$. For $n\ge 1$ the state $n+f$ moves to $n+1+f$ with probability $p_\sigma(n+f)$ and to $n-1+f$ otherwise, and the state $-n+f$ moves to $-n-1+f$ with probability $p_\sigma(n-f)$ and to $-n+1+f$ otherwise. Finally, the state $f$ moves to $1+f$ with probability $p_\sigma(f)$, to state $-1+f$ with probability $p_\sigma(-f)$, and stays at $f$ with probability $r_\sigma(f)$. Here
\begin{align*}
p_\sigma(x) &= \sum_{j\ge 1}(-1)^{j-1}\exp\bigg(-\frac{j^2+2xj}{2\sigma^2}\bigg)\\
r_\sigma(f) &= \sum_{j=-\infty}^\infty (-1)^j\exp\bigg(-\frac{j^2+2fj}{2\sigma^2}\bigg)
\end{align*}
for all $x\in\mb{R}$.
\end{definition}

These series clearly absolutely converge. We prove that these indeed correspond to consistent probabilities giving a walk, and additionally show that this walk preserves the discrete Gaussian distribution on $f+\mb{Z}$ (i.e., $\mc{N}(0,\sigma^2)|_{f+\mb{Z}}$).
\begin{lemma}\label{lem:validity}
For $\sigma > 0$ and $f\in[-1/2,1/2]$, we have that $p_\sigma(n\pm f)\in(0,1)$ for all $n\ge 0$, that $p_\sigma(f)+r_\sigma(f)+p_\sigma(-f) = 1$, that $r_\sigma(f)\in[0,1]$, and that furthermore
\[r_\sigma(f)\le e^{-\sigma^2}\]
if $\sigma\ge 1/2$. Additionally, $\mc{N}(0,\sigma^2)|_{f+\mb{Z}}$ is stationary under a step of random walk defined in \cref{def:jacobi-walk} with parameters $\sigma,f$.
\end{lemma}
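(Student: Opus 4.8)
The plan is to verify the claimed identities by directly manipulating the theta-like series $p_\sigma$ and $r_\sigma$, using the key fact that the discrete Gaussian weight $w(n+f) = \exp(-(n+f)^2/(2\sigma^2))$ satisfies a clean multiplicative relation under shifts $n \mapsto n+1$. First I would set up notation: write $w(x) = \exp(-x^2/(2\sigma^2))$, so that $p_\sigma(x) = \sum_{j\ge 1}(-1)^{j-1} w(x+j)/w(x) \cdot \text{(correction)}$ — more precisely, observe $\exp(-(j^2+2xj)/(2\sigma^2)) = w(x+j)/w(x)$. Thus $p_\sigma(x) = \sum_{j\ge 1}(-1)^{j-1} w(x+j)/w(x)$ and similarly $r_\sigma(f) = \sum_{j\in\mb Z}(-1)^j w(f+j)/w(f)$. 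This reframing makes the stationarity computation transparent: the mass flowing from a state should telescope.

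For the probability constraints, the main tool is that $p_\sigma(x)$ is an alternating series whose terms $w(x+j)/w(x)$ are strictly decreasing in $j$ for $j \ge 1$ when $x \ge -1/2$ (since $w(x+j)$ is decreasing in $j$ once $x + j \ge x+1 > 0$... careful: need $x+j > 0$, which holds for $x \ge -1/2$, $j\ge 1$). An alternating series with strictly decreasing positive terms lies strictly between its first term and first-minus-second partial sum, giving $p_\sigma(x) \in (0, w(x+1)/w(x)) \subseteq (0,1)$ when $x+1/2 \ge 1/2$, i.e. $x \ge 0$; a bit more care handles $x = n \pm f$ with $n \ge 1$ and also $n=0$. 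For the identity $p_\sigma(f) + r_\sigma(f) + p_\sigma(-f) = 1$: multiply through by $w(f)$ and note $p_\sigma(f) w(f) = \sum_{j\ge 1}(-1)^{j-1}w(f+j)$, $p_\sigma(-f)w(f) = p_\sigma(-f)w(-f) = \sum_{j\ge 1}(-1)^{j-1}w(-f+j) = \sum_{j\le -1}(-1)^{j+1}w(f+j)$ after reindexing, wait — one should check: $w(-f+j) = w(f-j)$, and $f - j$ for $j \ge 1$ ranges over $f-1, f-2, \ldots$; combined with the $r_\sigma$ term $\sum_{j\in\mb Z}(-1)^j w(f+j)$, everything should collapse to the $j=0$ term $w(f)$, giving the sum $1$ after dividing by $w(f)$. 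For the bound $r_\sigma(f) \le e^{-\sigma^2}$ when $\sigma \ge 1/2$: pair up terms $j$ and $-j-1$ (or $j$ and $1-j$) in the series $\sum_{j}(-1)^j w(f+j)/w(f)$ to get a sum of positive quantities, then bound; alternatively use the Poisson summation / Jacobi theta transformation, but the pairing argument is more elementary. The exponent $\sigma^2$ in the bound strongly suggests looking at $w(f+1)w(f)/w(f)^2$-type ratios near $f \approx 1/2$, where the worst case occurs.

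For stationarity, I would compute, for each target state $s \in f + \mb Z$, the total incoming mass under one step and check it equals $w(s)/Z$ where $Z = \sum_{m} w(m+f)$ is the normalizing constant. For a generic state $n+1+f$ with $n \ge 1$, incoming mass comes from $n+f$ (moving right, probability $p_\sigma(n+f)$) and from $n+2+f$ (moving left, probability $1 - p_\sigma(n+2+f)$); the claim is $w(n+f)p_\sigma(n+f) + w(n+2+f)(1-p_\sigma(n+2+f)) = w(n+1+f)$. Expanding using $p_\sigma(x)w(x) = \sum_{j\ge 1}(-1)^{j-1}w(x+j)$ and $w(x)(1-p_\sigma(x)) = w(x) - \sum_{j\ge 1}(-1)^{j-1}w(x+j) = \sum_{j\le 0}(-1)^{j}w(x+j)$ wait let me recompute: $w(x) - \sum_{j\ge 1}(-1)^{j-1}w(x+j) = w(x) + \sum_{j\ge 1}(-1)^j w(x+j) = \sum_{j\ge 0}(-1)^j w(x+j)$. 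So with $x = n+2+f$, $w(x)(1-p_\sigma(x)) = \sum_{j \ge 0}(-1)^j w(n+2+f+j) = \sum_{k \ge n+2}(-1)^{k-n-2}w(k+f)$, and $p_\sigma(n+f)w(n+f) = \sum_{j\ge 1}(-1)^{j-1}w(n+f+j) = \sum_{k\ge n+1}(-1)^{k-n-1}w(k+f)$; adding these, the terms for $k \ge n+2$ cancel in pairs (opposite signs) leaving exactly the $k = n+1$ term $w(n+1+f)$, as desired. The negative states and the contributions into/out of the special state $f$ require a parallel but slightly fussier bookkeeping — and I expect the handling of the central state $f$ (which has the extra self-loop $r_\sigma(f)$ and receives mass from both $\pm(1+f)$... actually from $1+f$ and $-1+f$) to be the main obstacle, since it's where the two halves of the walk meet and where the doubly-infinite series $r_\sigma$ enters; one must verify $w(-1+f)\cdot\big(1-p_\sigma(1-f)\big) + w(1+f)\cdot\big(1 - p_\sigma(1+f)\big) + w(f) r_\sigma(f) = w(f)$, which again should follow by expanding all three pieces into theta-series and telescoping. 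Throughout, absolute convergence (already noted in the excerpt) justifies all the reindexing and rearrangement.
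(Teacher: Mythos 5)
Most of your plan is sound and matches the paper's actual argument: the alternating-series bound for $p_\sigma(n\pm f)\in(0,1)$, the observation that $p_\sigma(f)w(f)+r_\sigma(f)w(f)+p_\sigma(-f)w(f)$ collapses to the $j=0$ term $w(f)$, and the telescoping identity $p_\sigma(x-1)w(x-1)+(1-p_\sigma(x+1))w(x+1)=w(x)$ for stationarity are all exactly what the paper does. (For the central state $f$ the paper takes a small shortcut --- since every other state's mass is preserved and the total is $1$, the state $f$ is automatic --- but your direct verification at $f$ works too and reduces to the same cancellation as the sum-to-one identity.)

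The genuine gap is in the two claims about $r_\sigma$: that $r_\sigma(f)\ge 0$ and that $r_\sigma(f)\le e^{-\sigma^2}$ for $\sigma\ge 1/2$. Your proposed "pair up terms $j$ and $-j-1$ (or $j$ and $1-j$) to get a sum of positive quantities" does not work: with the pairing $j\leftrightarrow -j-1$ the pairs are $(-1)^j\bigl(w(f+j)-w(f-j-1)\bigr)$, which alternate in sign rather than being uniformly nonnegative (e.g.\ the $j=1$ pair is $-(w(f+1)-w(f-2))\le 0$), and the symmetric pairing $j\leftrightarrow -j$ fails the alternating-series test at the first step since $w(f+1)+w(f-1)$ can exceed $w(f)$ for large $\sigma$. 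Indeed, for large $\sigma$ consecutive terms of $\sum_j(-1)^jw(f+j)$ are nearly equal, so positivity rests on a near-total cancellation and is genuinely delicate --- it is the one nontrivial analytic fact in the lemma. The paper resolves it with the Jacobi triple product identity, writing
\[r_\sigma(f)=\prod_{j=1}^\infty(1-e^{-j/\sigma^2})(1-e^{-(2j+2f-1)/(2\sigma^2)})(1-e^{-(2j-2f-1)/(2\sigma^2)}),\]
from which nonnegativity, the bound $r_\sigma(f)\le r_\sigma(0)$ (via $(1-ua)(1-u/a)\le(1-u)^2$), and the decay $r_\sigma(0)\le(1-e^{-1})^{3\lfloor\sigma^2\rfloor}\le e^{-\sigma^2}$ for $\sigma\ge 2$ (plus a numerical check on $\sigma\in[1/2,2]$) all follow immediately. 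Your alternative suggestion of Poisson summation can also be made to work for the positivity when $\sigma$ is bounded below, but your sketch as written does not prove either claim; note also that your intuition that the worst case is near $f\approx 1/2$ is backwards --- the product formula shows $r_\sigma(\pm 1/2)=0$ and the maximum over $f$ is attained at $f=0$.
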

\begin{proof}
First, note that $\exp(-(j^2+2xj)/\sigma^2)$ is strictly decreasing on integers $j\ge 1$ as long as $x\ge -1/2$. Therefore $p_\sigma(x)$ is given by an alternating series with strictly decreasing terms, and we immediately deduce
\[0 < p_\sigma(x)\le\exp\bigg(-\frac{j^2+2xj}{2\sigma^2}\bigg) < 1.\]
Since $n+f,n-f\ge -1/2$ for $n\ge 0$, we see that $p_\sigma(n\pm f)\in(0,1)$, as desired. Second, note that
\[p_\sigma(-f)+r_\sigma(f)+p_\sigma(f) = 1\]
holds as trivially everything except the $j = 0$ term of the sum for $r_\sigma(f)$ cancels. Third, we have for $u = \exp(-1/(2\sigma^2))$ and $v = \sqrt{-1}\exp(-f/(2\sigma^2))$ that $|u| < 1$ and $v\neq 0$, hence the Jacobi triple product identity (see \cite{And65} for a short but slick proof) yields
\begin{align}
r_\sigma(f) = \sum_{j=-\infty}^\infty u^{j^2}v^{2j} &= \prod_{j=1}^\infty(1-u^{2j})(1+u^{2j-1}v^2)(1+u^{2j-1}v^{-2})\notag\\
&= \prod_{j=1}^\infty(1-e^{-j/\sigma^2})(1-e^{-(2j+2f-1)/(2\sigma^2)})(1-e^{-(2j-2f-1)/(2\sigma^2)}).\label{eq:jacobi-triple-product}
\end{align}
Since $f\in[-1/2,1/2]$ we see each term is nonnegative and clearly less than $1$, so $r_\sigma(f)\in[0,1]$ is immediate. Therefore we indeed have a well-defined walk. In fact, we see that
\[r_\sigma(f)\le r_\sigma(0)\le \prod_{j=1}^\infty (1-e^{-j/\sigma^2})^3 \le  \prod_{j=1}^{\lfloor\sigma^2\rfloor}(1-e^{-j/\sigma^2})^{3}\le (1-e^{-1})^{3\lfloor\sigma^2\rfloor}.\]
This is at most $\exp(-\sigma^2)$ for $\sigma\ge 2$, and we can further numerically check that $r_\sigma(0)\le\exp(-\sigma^2)$ for $\sigma\in[1/2,2]$.

Now we show that this walk preserves $\mc{N}(0,\sigma^2)|_{f+\mb{Z}}$. Note that
\[1-p_\sigma(x) = \sum_{j\ge 0}(-1)^j\exp\bigg(-\frac{j^2+2xj}{2\sigma^2}\bigg).\]
Therefore
\begin{align*}
p_\sigma(x-1)&\exp\bigg(-\frac{(x-1)^2}{2\sigma^2}\bigg) + (1-p_\sigma(x+1))\exp\bigg(-\frac{(x+1)^2}{2\sigma^2}\bigg)\\
&= \sum_{j\ge 1}(-1)^{j-1}\exp\bigg(-\frac{(j+x-1)^2}{2\sigma^2}\bigg) + \sum_{j\ge 0}(-1)^j\exp\bigg(-\frac{(j+x+1)^2}{2\sigma^2}\bigg)\\
&= \exp\bigg(-\frac{x^2}{2\sigma^2}\bigg).
\end{align*}
Since the pdf of $\mc{N}(0,\sigma^2)|_{f+\mb{Z}}$ at $n+f$ is proportional to $\exp(-(n+f)^2/(2\sigma^2))$, we find that the random walk preserves this distribution at $n+f$ for all $n\neq 0$ (applying the above equation at values $x = n\pm f$). Furthermore, the final distribution is clearly still supported on $f+\mb{Z}$, therefore the probability at $n = 0$ is also preserved as the total sum is $1$.
\end{proof}

We immediately derive a walk which preserves $\mc{N}(0,\sigma^2)$ by piecing together all $f\in[-1/2,1/2)$. Let $J_x^\sigma$ be the random variable defined by writing $x = n+f$, where $f\in[-1/2,1/2)$, and then performing a step according to \cref{def:jacobi-walk}.
\begin{lemma}\label{lem:gaussian-jacobi}
If $Z = \mc{N}(0,\sigma^2)$ then $Z + J_Z^\sigma$ is distributed as $\mc{N}(0,\sigma^2)$.
\end{lemma}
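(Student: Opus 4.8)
The plan is to disintegrate $\mc{N}(0,\sigma^2)$ along the cosets $f+\mb{Z}$ and reduce to the per-coset stationarity already established in \cref{lem:validity}. Write any $x\in\mb{R}$ uniquely as $x = n+f$ with $n\in\mb{Z}$ and $f\in[-1/2,1/2)$, and call $f = f(x)$ the centered fractional part. The first observation is that a single step of the walk in \cref{def:jacobi-walk} with parameter $f$ always moves between states of $f+\mb{Z}$: from $n+f$ one goes to $(n\pm 1)+f$, and from $f$ one goes to $f$ or $(\pm 1)+f$. Hence for every $x$ and every realization of the step, $f(x + J_x^\sigma) = f(x)$.

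Next I would decompose $Z = \mc{N}(0,\sigma^2)$ as $Z = N + F$ with $N\in\mb{Z}$ and $F = f(Z)\in[-1/2,1/2)$. The marginal density of $F$ is $\rho(f) \propto \sum_{n\in\mb{Z}}\exp(-(n+f)^2/(2\sigma^2))$, normalized over $[-1/2,1/2)$ (the normalizing constant being just $\sqrt{2\pi\sigma^2}$, since the $f$-integral unfolds the sum back to $\int_{\mb{R}}\exp(-x^2/(2\sigma^2))\,dx$). Conditionally on $F = f$, the law of $Z$ is exactly the discrete Gaussian $\mc{N}(0,\sigma^2)|_{f+\mb{Z}}$, because the density of $\mc{N}(0,\sigma^2)$ at $n+f$ is proportional to $\exp(-(n+f)^2/(2\sigma^2))$. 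Thus $\mc{N}(0,\sigma^2)$ is the mixture $\int_{-1/2}^{1/2}\mc{N}(0,\sigma^2)|_{f+\mb{Z}}\,\rho(f)\,df$.

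Finally, I would condition $Z + J_Z^\sigma$ on $F$. By the first paragraph $F$ is deterministically unchanged by the step, so its marginal law is the same. Conditionally on $F = f$, the variable $Z$ lies in $f+\mb{Z}$ with law $\mc{N}(0,\sigma^2)|_{f+\mb{Z}}$, and $Z + J_Z^\sigma$ is obtained by applying one step of the walk with parameters $\sigma,f$; by the last assertion of \cref{lem:validity} this conditional law is again $\mc{N}(0,\sigma^2)|_{f+\mb{Z}}$. Since both the marginal of $F$ and every conditional law agree with those of $Z$, the two mixtures coincide, giving $Z + J_Z^\sigma \sim \mc{N}(0,\sigma^2)$. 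There is essentially no obstacle beyond carefully setting up this disintegration: all the analytic content sits in \cref{lem:validity}, and the present lemma merely glues the cosets together using that $J_x^\sigma$ never leaves its coset.
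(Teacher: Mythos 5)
Your proof is correct and is exactly the argument the paper intends: the paper gives no separate proof of this lemma, merely remarking that it follows ``by piecing together all $f\in[-1/2,1/2)$,'' which is precisely the coset disintegration you carry out. Your write-up just makes explicit the two facts the paper leaves implicit --- that $J_x^\sigma$ never changes the centered fractional part, and that the conditional law of $Z$ given $F=f$ is $\mc{N}(0,\sigma^2)|_{f+\mb{Z}}$ --- before invoking \cref{lem:validity}.
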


\section{\texorpdfstring{$\pm 1, 2$}{+-1, 2} walk}\label{sec:2-walk}
We now consider a variant of the above random walk with discrete $\pm 1$ and $2$ steps. Recall the definition of $p_\sigma(x)$ and $r_\sigma(f)$ from earlier. We will require the following numerical estimate which is deferred to \cref{sec:appendix}.
\begin{lemma}\label{lem:inequality}
If $\sigma\ge 1$ and $f\in[-1/2,1/2]$ then
\[p_\sigma(1+f)\ge r_\sigma(f)\exp\bigg(\frac{2f+1}{2\sigma^2}\bigg).\]
\end{lemma}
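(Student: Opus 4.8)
The plan is to reduce the claimed inequality to the single bound $p_\sigma(-1-f)\le 1$ — where $p_\sigma(-1-f)$ is interpreted via its defining absolutely convergent series — and then to prove that bound by rewriting $p_\sigma(-1-f)$ in terms of $p_\sigma(1-f)$, whose argument is positive and therefore well-behaved. For the reduction I would first establish the algebraic identity
\[
r_\sigma(f)\exp\!\big((2f+1)/(2\sigma^2)\big) = p_\sigma(1+f) + p_\sigma(-1-f) - 1 .
\]
This follows by multiplying the bilateral series for $r_\sigma(f)$ by $\exp((2f+1)/(2\sigma^2))$, using the factorization $j^2+2fj-2f-1=(j-1)(j+1+2f)$, reindexing $k=j-1$, splitting the resulting sum into $k\ge 1$ and $k\le 0$, recognizing the first piece as $p_\sigma(1+f)$, and (after $m=-k$) the second as $p_\sigma(-1-f)-1$ via the identity $1-p_\sigma(x)=\sum_{j\ge 0}(-1)^j\exp(-(j^2+2xj)/(2\sigma^2))$ from the proof of \cref{lem:validity}. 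Substituting this into the target inequality, the $p_\sigma(1+f)$ terms cancel and the lemma becomes exactly $p_\sigma(-1-f)\le 1$.

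Next I would apply the telescoping identity $p_\sigma(x-1)e^{-(x-1)^2/(2\sigma^2)}+(1-p_\sigma(x+1))e^{-(x+1)^2/(2\sigma^2)}=e^{-x^2/(2\sigma^2)}$ — valid for all real $x$, as shown in the proof of \cref{lem:validity} — at the value $x=-f$, which yields
\[
p_\sigma(-1-f)\,e^{-(1+f)^2/(2\sigma^2)} = e^{-f^2/(2\sigma^2)} - (1-p_\sigma(1-f))\,e^{-(1-f)^2/(2\sigma^2)} .
\]
Since $1-f\ge 1/2>0$, the expansion $(1-p_\sigma(1-f))\,e^{-(1-f)^2/(2\sigma^2)}=\sum_{j\ge 0}(-1)^j e^{-(j+1-f)^2/(2\sigma^2)}$ is an alternating series with strictly decreasing positive terms, hence at least its first two terms $e^{-(1-f)^2/(2\sigma^2)}-e^{-(2-f)^2/(2\sigma^2)}$. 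Plugging this in, $p_\sigma(-1-f)\le 1$ reduces to
\[
e^{-f^2/(2\sigma^2)}+e^{-(2-f)^2/(2\sigma^2)} \;\le\; e^{-(1-f)^2/(2\sigma^2)}+e^{-(1+f)^2/(2\sigma^2)} .
\]

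Writing $\psi(s)=e^{-(1-s)^2/(2\sigma^2)}+e^{-(1+s)^2/(2\sigma^2)}$, the last display reads $\psi(1-f)\le\psi(f)$; as $\psi$ is even and $|1-f|\ge 1/2\ge|f|$ for $f\in[-1/2,1/2]$, it suffices to show that $\psi$ is non-increasing on $[0,\infty)$. For $s\ge 1$ this is immediate from $\psi'(s)=\sigma^{-2}\big((1-s)e^{-(1-s)^2/(2\sigma^2)}-(1+s)e^{-(1+s)^2/(2\sigma^2)}\big)$, whose two summands are $\le 0$ and $<0$ respectively. For $0\le s<1$, dividing by positive factors and taking logarithms, $\psi'(s)\le 0$ is equivalent to $\log\frac{1+s}{1-s}\ge\frac{2s}{\sigma^2}$, which holds because $\log\frac{1+s}{1-s}=2\big(s+s^3/3+s^5/5+\cdots\big)\ge 2s\ge\frac{2s}{\sigma^2}$ using $\sigma\ge 1$. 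This completes the argument.

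The one genuine subtlety — the reason the estimate is not completely trivial — is that one cannot truncate the series for $p_\sigma(-1-f)$ directly: its argument is negative, so the series is not monotone-alternating, and already short truncations can exceed $1$. Rewriting $p_\sigma(-1-f)$ through $p_\sigma(1-f)$ via the stationarity identity cures this, after which a two-term alternating-series bound together with the elementary monotonicity of $\psi$ suffices. I expect the sign-and-reindexing bookkeeping in the first identity to be the most error-prone step, but it is routine.
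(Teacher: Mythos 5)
Your proof is correct, and it takes a genuinely different route from the paper's. The paper bounds the two sides separately: it uses the Jacobi triple product \cref{eq:jacobi-triple-product} to reduce to $r_\sigma(f)\le r_1(0)$, and then lower-bounds $p_\sigma(1+f)$ by a case split on $\sigma$ — a two-term truncation plus numerical optimization over the endpoints $f\in\{\pm1/2\}$, $\sigma\in\{1,2\}$ for $\sigma\in[1,2]$, and a convexity/pairing argument giving $p_\sigma(1+f)\ge 0.05\exp(1/\sigma^2)$ for $\sigma\ge 2$. You instead prove the exact identity $r_\sigma(f)e^{(2f+1)/(2\sigma^2)}=p_\sigma(1+f)+p_\sigma(-1-f)-1$ (I checked the factorization $j^2+2fj-2f-1=(j-1)(j+1+2f)$ and the reindexing; the bookkeeping is right), which reduces the lemma to the clean structural statement $p_\sigma(-1-f)\le 1$, and you then establish that via the stationarity identity at $x=-f$, a two-term alternating-series bound, and the elementary monotonicity of $\psi(s)=e^{-(1-s)^2/(2\sigma^2)}+e^{-(1+s)^2/(2\sigma^2)}$ on $[0,\infty)$ (where $\sigma\ge1$ enters only through $\log\tfrac{1+s}{1-s}\ge 2s\ge 2s/\sigma^2$). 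Your version buys a fully self-contained argument with no numerical verification, no case split on $\sigma$, and no appeal to the triple product, and it identifies exactly where the inequality is tight (at $f=1/2$, where $\psi(1-f)=\psi(f)$); the paper's version is shorter to write down but leans on ``numerical checking shows'' twice. Your caveat about why one cannot truncate $p_\sigma(-1-f)$ directly is also accurate: its first term $\exp((1+2f)/(2\sigma^2))$ already exceeds $1$, so the detour through $p_\sigma(1-f)$ is genuinely needed.
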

\begin{remark}
This inequality is immediate for large $\sigma$ as the left uniformly tends to $1/2$ and the right uniformly decays to zero.
\end{remark}

\begin{definition}\label{def:ramanujan-walk}
Given $\sigma\ge 1$ and $f\in[-1/2,1/2]$, consider the following random walk on $f+\mb{Z}$. For $n\ge 2$ the state $n+f$ moves to $n+1+f$ with probability $p_\sigma(n+f)$ and to $n-1+f$ otherwise. For $n\ge 1$ the state $-n+f$ moves to $-n-1+f$ with probability $p_\sigma(n-f)$ and to $-n+1+f$ otherwise. The state $f$ moves to $1+f$ with probability $p_\sigma(f)$, to state $-1+f$ with probability $p_\sigma(-f)$, and moves to $2+f$ with probability $r_\sigma(f)$. Finally, for $n = 1$ the state $1+f$ moves to $2+f$ with probability $p_\sigma(1+f) - r_\sigma(f)\exp((2f+1)/(2\sigma^2))$ and to $f$ otherwise.
\end{definition}

\begin{lemma}\label{lem:validity-II}
For $\sigma\ge 1$ and $f\in[-1/2,1/2]$, we have that the walk in \cref{def:ramanujan-walk} is well-defined, and that $\mc{N}(0,\sigma^2)|_{f+\mb{Z}}$ is stationary under a step of the walk with parameters $\sigma,f$.
\end{lemma}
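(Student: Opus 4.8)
The plan is to view this walk as a local modification of the $0,\pm1$ walk of \cref{def:jacobi-walk} with the same parameters $\sigma,f$, so that \cref{lem:validity} supplies almost everything and only the new transitions near $0$ need attention. For well-definedness: for $n\ge 2$ and at all negative states, $p_\sigma(n\pm f)\in(0,1)$ by \cref{lem:validity}; at the state $f$ the exit probabilities $p_\sigma(f),p_\sigma(-f),r_\sigma(f)$ lie in $[0,1]$ and sum to $1$, again by \cref{lem:validity}. The only nontrivial case is the state $1+f$, whose probability of moving to $2+f$ is $p_\sigma(1+f)-r_\sigma(f)\exp((2f+1)/(2\sigma^2))$; this is $\ge 0$ exactly by \cref{lem:inequality} and is $\le p_\sigma(1+f)<1$, so the complementary move to $f$ also has probability in $(0,1]$.

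For stationarity, write $w_n=\exp(-(n+f)^2/(2\sigma^2))$ for the unnormalized mass assigned to $n+f$ by $\mc N(0,\sigma^2)|_{f+\mb Z}$, and record the identity $w_0/w_1=\exp((2f+1)/(2\sigma^2))$. Comparing \cref{def:ramanujan-walk} with \cref{def:jacobi-walk}, the two walks have identical transitions except that (i) the self-loop at $f$ of probability $r_\sigma(f)$ is replaced by a jump $f\to 2+f$ of the same probability, and (ii) the transition $1+f\to 2+f$ has its probability decreased by $c:=r_\sigma(f)\exp((2f+1)/(2\sigma^2))$, with that much probability mass instead routed along $1+f\to f$. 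The decrease $c$ is chosen precisely so that $w_1 c=w_0 r_\sigma(f)$. Consequently: the mass $w_0 r_\sigma(f)$ that (i) removes from the inflow of $f$ (the deleted self-loop) is exactly restored by the new inflow $w_1 c$ along $1+f\to f$; the mass $w_0 r_\sigma(f)$ that (i) newly sends into $2+f$ is exactly cancelled by the loss $w_1 c$ along $1+f\to 2+f$ in (ii); and the total inflow into $1+f$ is unaffected, since (ii) only redistributes the outgoing mass of $1+f$. Hence the incoming mass at each of $f$, $1+f$, $2+f$ equals that for the walk of \cref{def:jacobi-walk}, the transitions agree at every other state, and $\mc N(0,\sigma^2)|_{f+\mb Z}$ is stationary there by \cref{lem:validity}; therefore it is stationary for the present walk.

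The only real content is the nonnegativity of the $1+f\to 2+f$ probability, i.e.\ \cref{lem:inequality} (deferred to \cref{sec:appendix}); granting that and the identity $w_0/w_1=\exp((2f+1)/(2\sigma^2))$, the rest is the bookkeeping above. Equivalently, one may simply write out the three balance equations at $f$, $1+f$, and $2+f$ and subtract the corresponding identities established in the proof of \cref{lem:validity}; the same cancellations appear. I expect no genuine obstacle here beyond keeping the rerouting accounting straight, with all analytic difficulty isolated in \cref{lem:inequality}.
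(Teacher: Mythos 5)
Your proof is correct and takes essentially the same route as the paper: well-definedness reduces to \cref{lem:validity} plus the single nonnegativity condition supplied by \cref{lem:inequality}, and stationarity follows because the modification only reroutes mass among $f$, $1+f$, $2+f$, with the rerouted amounts matched via the identity $w_0/w_1=\exp((2f+1)/(2\sigma^2))$. The paper phrases this as checking the explicit balance equation at $2+f$ (and invoking total mass $=1$ for the state $f$), but the cancellation of the $r_\sigma(f)$ terms there is exactly your bookkeeping identity $w_1c=w_0r_\sigma(f)$.
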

\begin{proof}
That all probabilities are valid follows from \cref{lem:validity}, except that we need to additionally verify
\[p_\sigma(1+f)\ge r_\sigma(f)\exp\bigg(\frac{2f+1}{2\sigma^2}\bigg).\]
This is precisely \cref{lem:inequality}.

To verify that $\mc{N}(0,\sigma^2)|_{f+\mb{Z}}$ is preserved under the walk defined in \cref{def:ramanujan-walk}, recall that $\mc{N}(0,\sigma^2)|_{f+\mb{Z}}$ is preserved under walk defined in \cref{def:jacobi-walk} by \cref{lem:validity}. This walk only differs in its probabilities that $f$ goes to $f,2+f$ and that $1+f$ goes to $f,2+f$. Therefore the probabilities at $n+f$ for $n\in\mb{Z}\setminus\{0,2\}$ are correct. Since the probabilities sum to $1$, it is enough to check the probability at $2+f$ is correct. It therefore suffices to show that
\begin{align*}
r_\sigma(f)\exp\bigg(-&\frac{f^2}{2\sigma^2}\bigg) + \bigg(p_\sigma(1+f)-r_\sigma(f)\exp\bigg(\frac{2f+1}{2\sigma^2}\bigg)\bigg)\exp\bigg(-\frac{(1+f)^2}{2\sigma^2}\bigg)\\
&+ (1-p_\sigma(3+f))\exp\bigg(-\frac{(3+f)^2}{2\sigma^2}\bigg)= \exp\bigg(-\frac{(2+f)^2}{2\sigma^2}\bigg).
\end{align*}
We already verified in the proof of \cref{lem:validity} that
\[p_\sigma(x-1)\exp\bigg(-\frac{(x-1)^2}{2\sigma^2}\bigg) + (1-p_\sigma(x+1))\exp\bigg(-\frac{(x+1)^2}{2\sigma^2}\bigg) = \exp\bigg(-\frac{x^2}{2\sigma^2}\bigg).\]
Plugging in $x = 2+f$ gives the desired identity, upon canceling the terms containing $r_\sigma(f)$.
\end{proof}

Again, we immediately derive a walk which preserves $\mc{N}(0,\sigma^2)$ by piecing together all $f\in[-1/2,1/2)$. Let $R_x^\sigma$ be the random variable defined by writing $x = n+f$, where $f\in[-1/2,1/2)$, and then performing a step according to \cref{def:jacobi-walk}.
\begin{lemma}\label{lem:gaussian-ramanujan}
If $\sigma\ge 1$ and $Z = \mc{N}(0,\sigma^2)$ then $Z + R_Z^\sigma$ is distributed as $\mc{N}(0,\sigma^2)$.
\end{lemma}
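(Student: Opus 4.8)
The plan is to deduce \cref{lem:gaussian-ramanujan} from \cref{lem:validity-II} in exactly the way \cref{lem:gaussian-jacobi} follows from \cref{lem:validity}. The one structural input is that the continuous Gaussian disintegrates along cosets of $\mb{Z}$. Writing $f(x)\in[-1/2,1/2)$ for the representative of $x\bmod 1$ in that interval, a sample $Z=\mc{N}(0,\sigma^2)$ may be generated by first drawing $f$ from the density on $[-1/2,1/2)$ proportional to $\sum_{n\in\mb{Z}}\exp(-(n+f)^2/(2\sigma^2))$, and then, conditionally on $f$, drawing $Z$ from $\mc{N}(0,\sigma^2)|_{f+\mb{Z}}$, the measure on $f+\mb{Z}$ giving mass proportional to $\exp(-(n+f)^2/(2\sigma^2))$ to $n+f$. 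This is the routine decomposition of a density on $\mb{R}$ according to which coset of $\mb{Z}$ a point lies in, and needs no analytic content.

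Next I would note that the update $x\mapsto x+R_x^\sigma$ never leaves the coset: by construction in \cref{def:ramanujan-walk}, from any state $n+f$ the walk moves only to states $m+f$ with $m\in\mb{Z}$. Hence, conditionally on $f$, the variable $Z+R_Z^\sigma$ is obtained from $Z\sim\mc{N}(0,\sigma^2)|_{f+\mb{Z}}$ by running a single step of the Markov chain of \cref{def:ramanujan-walk} with parameters $\sigma,f$ (which is a well-defined chain since $\sigma\ge 1$, by \cref{lem:validity-II}). By \cref{lem:validity-II} that chain fixes $\mc{N}(0,\sigma^2)|_{f+\mb{Z}}$, so conditionally on $f$ the law of $Z+R_Z^\sigma$ is again $\mc{N}(0,\sigma^2)|_{f+\mb{Z}}$. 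Integrating over $f$, and using that $f(Z+R_Z^\sigma)=f(Z)$ automatically has the correct marginal, the unconditional law of $Z+R_Z^\sigma$ is $\mc{N}(0,\sigma^2)$, as claimed. The identical argument with \cref{lem:validity} in place of \cref{lem:validity-II} gives \cref{lem:gaussian-jacobi}.

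I do not anticipate a genuine obstacle here; the substantive work — validity of the transition probabilities and preservation of the discrete Gaussian on each coset — is already contained in \cref{lem:validity-II} (which in turn rests on \cref{lem:inequality}). The only care needed is measure-theoretic: justifying the coset disintegration of the Gaussian density, and observing that the half-open convention $f\in[-1/2,1/2)$ together with the atomlessness of $\mc{N}(0,\sigma^2)$ makes the boundary value $f=-1/2$ a null event, so it is harmless that \cref{def:ramanujan-walk} is stated for $f\in[-1/2,1/2]$.
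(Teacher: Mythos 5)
Your argument is correct and is exactly the coset-disintegration reasoning the paper has in mind: the paper states \cref{lem:gaussian-ramanujan} without proof, regarding it as immediate from \cref{lem:validity-II} by "piecing together all $f\in[-1/2,1/2)$", which is precisely what you carry out. (Incidentally, your reading also silently corrects the paper's typo in the definition of $R_x^\sigma$, which should refer to \cref{def:ramanujan-walk} rather than \cref{def:jacobi-walk}.)
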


\section{Algorithmic Applications}\label{sec:algorithms}
We now derive a number of algorithmic consequences. 

\begin{algorithm}[ht]
\caption{$\tsc{PartialColoring}_\sigma(v_1,\cdots,v_t)$ \label{alg:partial-coloring}}
$w_0 \leftarrow \mc{N}(0,\sigma^2I_n)$ \\
\For{$1\le i\le t$}{
    $\sigma'\leftarrow\sigma/\snorm{v_i}_2$\\
    $x'\leftarrow\sang{w_{i-1},v_i}/\snorm{v_i}_2$\\
    $w_i \leftarrow w_{i-1} +  J_{x'}^{\sigma'}v_i.$ \label{line:move-partial}
}
$w\leftarrow w_t - w_0$
\end{algorithm}

\begin{algorithm}[ht]\label{alg:balance}
\caption{$\tsc{Balancing}_\sigma(v_1,\cdots,v_t)$}
$w_0 \leftarrow \mc{N}(0,\sigma^2I_n)$ \\
\For{$1\le i\le t$}{
    $\sigma'\leftarrow\sigma/\snorm{v_i}_2$\\
    $x'\leftarrow\sang{w_{i-1},v_i}/\snorm{v_i}_2$\\
    $w_i \leftarrow w_{i-1} +  R_{x'}^{\sigma'}v_i.$ \label{line:move-balance}
}
$w\leftarrow w_t - w_0$
\end{algorithm}
In both $\tsc{Balacing}_\sigma$ and  $\tsc{PartialColoring}_\sigma$, $J$ and $R$ are sampled independently every time. Additionally, note that $\tsc{Balacing}_\sigma$ is only well-defined when $\sigma\ge 1$. Finally, we clearly see that $\tsc{PartialColoring}_\sigma$ assigns a sign of $\pm 1$ to each given vector online, or chooses to omit it (a sign of $0$), while $\tsc{Balancing}_\sigma$ does the same except that the sign $2$ is the additional alternative.

Our first algorithm application is a (weak version) of the partial coloring lemma.
\begin{theorem}\label{thm:partial-coloring}
Let $\snorm{v_1}_2,\ldots,\snorm{v_t}_2\le 1$ and $\delta\in(0,1/2)$. With probability at least $1-\delta$ we have that $w_\ell-w_0$ in $\tsc{PartialColoring}_1(v_1,\ldots,v_t)$ is $2\sqrt{2\log(2nt/\delta)}$-bounded for all times $\ell\in[t]$. Furthermore, with probability at least $1-\delta$ we have that $w_t-w_0$ is $2\sqrt{2\log(2n/\delta)}$-bounded. Finally, at least $96.3\%$ of vectors are used with probability $1-\exp(-\Omega(t))$.
\end{theorem}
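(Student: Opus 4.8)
The plan is to prove the three assertions in order, all resting on the structural fact that, since $\sigma=1$, every iterate $w_i$ of $\tsc{PartialColoring}_1$ is marginally distributed as $\mc{N}(0,I_n)$, for each $i\in\{0,1,\dots,t\}$. I would establish this by induction on $i$, the base case $w_0\sim\mc{N}(0,I_n)$ being the initialization. Assume $w_{i-1}\sim\mc{N}(0,I_n)$. Writing $u_i=v_i/\snorm{v_i}_2$ and decomposing $\mb{R}^n=\mb{R}u_i\oplus u_i^\perp$, the update in Line~\ref{line:move-partial} changes only the $u_i$-component; by rotational symmetry of $\mc{N}(0,I_n)$, that component is an independent $\mc{N}(0,1)$ and the part on $u_i^\perp$ is an independent $\mc{N}(0,I_{n-1})$ left untouched. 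After rescaling the $u_i$-component by $1/\snorm{v_i}_2$ (which is the role of the $\snorm{v_i}_2$ normalizations implicit in $x'$ and in $\sigma'=\sigma/\snorm{v_i}_2=1/\snorm{v_i}_2$) it becomes an independent $\mc{N}(0,(\sigma')^2)$, on which we perform one step of the Jacobi walk with parameter $\sigma'$; by \cref{lem:gaussian-jacobi} this preserves $\mc{N}(0,(\sigma')^2)$, so undoing the rescaling yields $w_i\sim\mc{N}(0,I_n)$.

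Given this, the two boundedness claims are union bounds over coordinates and times. For fixed $j\in[n]$ and $i$ we have $\Pr[\,|(w_i)_j|>s\,]\le e^{-s^2/2}$ by the Gaussian tail estimate $\Pr[|\mc{N}(0,1)|>s]\le e^{-s^2/2}$. Taking $s=\sqrt{2\log(2nt/\delta)}$ and summing failure probabilities over $j\in[n]$ and $i\in\{0,1,\dots,t\}$ (using $t+1\le 2t$) shows $\snorm{w_i}_\infty\le s$ for all such $i$ except with probability at most $\delta$, whence $\snorm{w_i-w_0}_\infty\le\snorm{w_i}_\infty+\snorm{w_0}_\infty\le 2s$ simultaneously for all $i\in[t]$, which is the first claim. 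Running the same union bound only over the two times $i\in\{0,t\}$ ($2n$ events) instead gives $s=\sqrt{2\log(2n/\delta)}$ and the sharper bound on $w_t-w_0$.

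For the omission statement, let $S_i$ be the indicator that step $i$ is a stay, i.e.\ that $v_i$ receives the sign $0$. By \cref{def:jacobi-walk} a stay can only be made from the central state of the walk, so conditioning on the history $\mc{F}_{i-1}$ through $w_{i-1}$ gives $\Pr[\,S_i=1\mid\mc{F}_{i-1}\,]\le r_{\sigma'_i}(f_i)$, where $f_i\in[-1/2,1/2)$ is the fractional part of $x'_i$ and $\sigma'_i=1/\snorm{v_i}_2\ge 1$ since $\snorm{v_i}_2\le 1$. From the Jacobi triple product \cref{eq:jacobi-triple-product} one checks that $r_\sigma(f)$ is maximized at $f=0$ and that $r_\sigma(0)$ is decreasing in $\sigma$, so $\Pr[\,S_i=1\mid\mc{F}_{i-1}\,]\le r_{\sigma'_i}(f_i)\le r_{\sigma'_i}(0)\le r_1(0)=\sum_{j\in\mb{Z}}(-1)^je^{-j^2/2}$, which an alternating-series truncation bounds by less than $0.037$ (in fact $r_1(0)\approx 0.036$); note this uses only the pointwise bound on $r$, not the marginal law of $x'_i$. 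Hence $\sum_{i\le t}S_i$ is stochastically dominated by $\mr{Binomial}(t,r_1(0))$ via a step-by-step coupling, and since $r_1(0)<0.037$ a Chernoff bound gives $\Pr[\sum_{i\le t}S_i\ge 0.037t]=e^{-\Omega(t)}$; equivalently, at least $0.963t$ vectors are used with probability $1-e^{-\Omega(t)}$.

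The genuinely delicate step is the structural invariant: the rest is routine Gaussian-tail and Chernoff bookkeeping, but everything hinges on checking that one pass through the loop of $\tsc{PartialColoring}_1$ is exactly a $\mc{N}(0,I_n)$-preserving operation, which requires correctly matching the $1/\snorm{v_i}_2$ rescaling implicit in $x'$ and the parameter $\sigma'=\sigma/\snorm{v_i}_2$ to the hypothesis $Z\sim\mc{N}(0,\sigma^2)$ of \cref{lem:gaussian-jacobi}. The only other point needing care is the numerical inequality $r_1(0)<0.037$, which—together with a little slack absorbed into the Chernoff bound—is precisely what yields the constant $96.3\%$.
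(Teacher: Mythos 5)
Your proposal is correct and follows essentially the same route as the paper: invoke \cref{lem:gaussian-jacobi} (via the orthogonal decomposition along $v_i$, which the paper leaves implicit) to get $w_i\sim\mc{N}(0,I_n)$ marginally, union-bound Gaussian tails over coordinates and times, and bound the conditional stay probability by $r_1(0)\approx 0.036$ followed by a Chernoff/stochastic-domination argument. The only differences are expository: you spell out the induction and the domination by a binomial, which the paper compresses into two sentences.
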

\begin{proof}
By \cref{lem:gaussian-jacobi} we immediately see that $w_i\sim\mc{N}(0,\sigma^2I_n)$ for all $i\in[t]$. The discrepancy results follow by trivial Gaussian estimates. For example, we see that the $j$th coordinate of $w_\ell$ is $\sqrt{2\log(2nt/\delta)}$-bounded with probability at least $\delta/(2nt)$. Taking a union bound over $0\le\ell\le t$ and $j\in[n]$ yields that $w_0,\ldots,w_t$ are bounded with probability at least $1-\delta$. Therefore each difference is also bounded.

The fraction of vectors used being large follows from Chernoff's inequality and the fact that at every step, conditional on all previous choices, a vector is used with probability at least
\[\min_{f\in[-1/2,1/2]}(1-r_1(f))\ge 0.9639.\qedhere\]
\end{proof}

Our second algorithmic application recovers the online vector balancing results of Alweiss, the first author, and the third author \cite[Theorems~1.1,~1.2]{ALS20}.
\begin{theorem}\label{thm:full-coloring}
Let $\snorm{v_1}_2,\ldots,\snorm{v_t}_2\le 1$, $\delta\in(0,1/2)$, and set $\sigma = \sqrt{\log(t/\delta)}$. With probability at least $1-\delta$ we have that $w_\ell-w_0$ in $\tsc{PartialColoring}_\sigma(v_1,\cdots,v_t)$ is $2\sqrt{2\log(t/\delta)\log(2nt/\delta)}$-bounded for all times $\ell\in [t]$. Furthermore, with probability at least $1-\delta$ we have that $w_t-w_0$ is $2\sqrt{2\log(t/\delta)\log(2n/\delta)}$-bounded. Finally, all vectors are used with probability at least $1-\delta$.
\end{theorem}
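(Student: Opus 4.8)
The plan is to rerun the argument behind \cref{thm:partial-coloring} essentially verbatim, now with $\sigma = \sqrt{\log(t/\delta)}$ in place of $\sigma = 1$. Enlarging $\sigma$ changes nothing in the Gaussian-preservation part, but it is precisely what makes \cref{lem:validity} push the per-step probability of a ``stay'' move (a $0$ sign) down to $\delta/t$, so that a union bound over the $t$ steps yields that all vectors are used with probability $1-\delta$; the price is that every discrepancy bound picks up a factor $\sigma = \sqrt{\log(t/\delta)}$ relative to \cref{thm:partial-coloring}.

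First I would check, by induction on $i$, that $w_i\sim\mc{N}(0,\sigma^2 I_n)$ for every $i\in\{0,1,\ldots,t\}$. Given $w_{i-1}\sim\mc{N}(0,\sigma^2 I_n)$, decompose it into its component along $\hat v_i := v_i/\snorm{v_i}_2$ and the orthogonal part; by rotational invariance of the isotropic Gaussian these are independent and the former is one-dimensional Gaussian along $\hat v_i$. Line~\ref{line:move-partial} alters only the $\hat v_i$-direction, and unwinding the rescaling in \cref{alg:partial-coloring} (the scale $\sigma' = \sigma/\snorm{v_i}_2$ and the coordinate $x'$ fed to the walk) this update is exactly one step of the chain from \cref{def:jacobi-walk} at scale $\sigma'$, so \cref{lem:gaussian-jacobi} applies and restores the Gaussian marginal. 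Hence $w_i\sim\mc{N}(0,\sigma^2 I_n)$, and since $w_i - w_{i-1} = J_{x'}^{\sigma'}v_i$ with $J_{x'}^{\sigma'}\in\{0,\pm1\}$, the output $w = w_t-w_0$ is an online $\{0,\pm1\}$-signed combination of the $v_i$.

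The discrepancy claims then follow from routine Gaussian tail estimates and union bounds, exactly as in \cref{thm:partial-coloring}. Using that $|\mc{N}(0,1)|$ exceeds $\sqrt{2s}$ with probability at most $e^{-s}$, take $s = \log(2nt/\delta)$ and apply this to each coordinate of each $w_\ell\sim\mc{N}(0,\sigma^2 I_n)$: a fixed coordinate of a fixed $w_\ell$ exceeds $\sigma\sqrt{2\log(2nt/\delta)}$ in absolute value with probability at most $\delta/(2nt)$, so a union bound over $j\in[n]$ and $\ell\in\{0,\ldots,t\}$ leaves total failure probability at most $\delta$, and on the good event the triangle inequality makes every $w_\ell-w_0$ at most $2\sigma\sqrt{2\log(2nt/\delta)} = 2\sqrt{2\log(t/\delta)\log(2nt/\delta)}$-bounded. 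The stand-alone bound for $w_t-w_0$ is the same computation with threshold $\sigma\sqrt{2\log(2n/\delta)}$ and a union bound over $j\in[n]$ and $\ell\in\{0,t\}$ only, yielding $2\sqrt{2\log(t/\delta)\log(2n/\delta)}$.

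For the last claim, at step $i$ a $0$ sign is possible only when the coordinate $x'$ lies (modulo $1$) in $[-1/2,1/2)$, and then only with probability $r_{\sigma'}(f)$ for the relevant fractional part $f$, where $\sigma' = \sigma/\snorm{v_i}_2$; since $\snorm{v_i}_2\le 1$ we have $\sigma'\ge\sigma = \sqrt{\log(t/\delta)} > 1/2$ (as $t/\delta > 2$), so \cref{lem:validity} gives $r_{\sigma'}(f)\le e^{-(\sigma')^2}\le e^{-\sigma^2} = \delta/t$, uniformly in $f$ and in the conditioning on the past. Summing over the $t$ steps, with probability at least $1-\delta$ no step produces a $0$ sign, i.e., all $t$ vectors are used. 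I do not expect a real obstacle here: the only genuine subtlety is that \cref{lem:gaussian-jacobi} yields Gaussianity of each marginal $w_\ell$ separately, so the ``for all times $\ell$'' part must come from the union bound above rather than from any maximal or martingale inequality, and the one design decision is the calibration $e^{-\sigma^2} = \delta/t$, which is exactly what pins down $\sigma = \sqrt{\log(t/\delta)}$.
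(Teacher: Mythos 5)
Your proposal is correct and follows essentially the same route as the paper: the paper's proof is literally ``identical to that of \cref{thm:partial-coloring}'' except for the observation that $\max_f r_\sigma(f)\le e^{-\sigma^2}=\delta/t$ by \cref{lem:validity}, followed by a union bound over the $t$ steps, which is exactly your argument. Your additional checks (that $\sigma'\ge\sigma>1/2$ so the $r_\sigma(f)\le e^{-\sigma^2}$ bound applies, and that the bound holds uniformly over the conditioning on the past) are correct details the paper leaves implicit.
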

\begin{proof}
The proof is essentially identical to that of \cref{thm:partial-coloring}. The only difference is that we see that at each step, a vector is not used with probability at most
\[\max_{f\in[-1/2,1/2]}r_\sigma(f)\le e^{-\sigma^2} = \frac{\delta}{t}\]
due to our choice of $\sigma$, by the inequality in \cref{lem:validity}. A union bound shows that all vectors are used with probability at least $1-\delta$.
\end{proof}
In fact, we can design an algorithm achieving the same bounds by using Algorithm \ref{alg:partial-coloring} for any value of $\sigma \ge 1$ as follows. To do this, first run Algorithm \ref{alg:partial-coloring}, and then rerun Algorithm \ref{alg:partial-coloring} on the vectors which were given a $0$ sign until no vectors remain (note that this can still be done in an online manner). By \cref{lem:validity}, specifically $r_\sigma(f) \le e^{-\sigma^2},$ this process will terminate with probability $1-\delta$ in $O(\sigma^{-2} \log(t/\delta))$ rounds. Each run produces a random vector with variance $O(\sigma^2)$ in every coordinate, hence the total variance is $O(\log (t/\delta))$ per coordinate as desired.

Finally we recover an online version of Banaszczyk \cite{Ban98}, except using $\pm 1, 2$-signings. The proof is identical to that of \cref{thm:partial-coloring} so we omit it.
\begin{theorem}\label{thm:balancing}
Let $\snorm{v_1}_2,\ldots,\snorm{v_t}_2\le 1$ and $\delta\in(0,1/2)$. With probability at least $1-\delta$ we have that $w_\ell-w_0$ in $\tsc{Balancing}_1(v_1,\ldots,v_t)$ is $2\sqrt{2\log(2nt/\delta)}$-bounded for all times $\ell\in[t]$. Furthermore, with probability at least $1-\delta$ we have that $w_t-w_0$ is $2\sqrt{2\log(2n/\delta)}$-bounded.
\end{theorem}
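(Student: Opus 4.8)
The plan is to mirror the proof of \cref{thm:partial-coloring} almost verbatim, since the only structural change is that $\tsc{Balancing}_1$ replaces the walk $J$ by the walk $R$. First I would invoke \cref{lem:gaussian-ramanujan}: at each step $i$, writing $\sigma' = 1/\snorm{v_i}_2 \ge 1$ and $x' = \sang{w_{i-1},v_i}/\snorm{v_i}_2$, the update $w_i = w_{i-1} + R_{x'}^{\sigma'} v_i$ has the property that, conditioned on $w_{i-1}$, the scalar $\sang{w_{i-1},v_i}/\snorm{v_i}_2 + R_{x'}^{\sigma'}$ is distributed as $\mc{N}(0,\sigma'^2)$ whenever $\sang{w_{i-1},v_i}/\snorm{v_i}_2$ was. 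Combined with the fact that $R_{x'}^{\sigma'} v_i$ lies in the span of $v_i$ and is an affine function of $w_{i-1}$ only through that scalar projection, a one-dimensional-marginal induction gives $w_i \sim \mc{N}(0, I_n)$ for every $i \in [t]$ (here $\sigma = 1$). This is the only place where $\sigma \ge 1$ is used, and it is exactly the hypothesis under which $R$ is defined.

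Next I would extract the tail bounds purely from the marginal distributions. Since each $w_\ell \sim \mc{N}(0,I_n)$, the $j$th coordinate is a standard Gaussian, so $\Pr[|(w_\ell)_j| > \sqrt{2\log(2nt/\delta)}] \le \delta/(2nt)$ by the standard bound $\Pr[|\mc{N}(0,1)| > s] \le e^{-s^2/2}$. A union bound over the $n$ coordinates and the $t+1$ indices $\ell \in \{0,1,\dots,t\}$ shows that all of $w_0,\dots,w_t$ are simultaneously $\sqrt{2\log(2nt/\delta)}$-bounded with probability at least $1-\delta$, whence each difference $w_\ell - w_0$ is $2\sqrt{2\log(2nt/\delta)}$-bounded. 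For the single-time-$t$ statement, one instead unions only over the $n$ coordinates of $w_0$ and $w_t$, replacing $2nt$ by $2n$, giving the $2\sqrt{2\log(2n/\delta)}$ bound on $w_t - w_0$.

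The main subtlety — and really the only thing to be careful about — is the joint-versus-marginal issue: the union bound over times $\ell$ requires that \emph{each} $w_\ell$ have the correct marginal, but does not require independence across $\ell$, so the argument goes through even though the $w_\ell$ are highly correlated. The only genuine obstacle is confirming that \cref{lem:gaussian-ramanujan} does what we need in $n$ dimensions: one must check that conditioning on $w_{i-1}$ fixes all of $w_{i-1}$ (not just its projection onto $v_i$), that the increment changes only the $v_i$-component, and that the orthogonal complement is untouched, so the resulting vector is again an isotropic Gaussian. This is identical to the reduction used for $\tsc{PartialColoring}_\sigma$ in \cref{thm:partial-coloring}, which is why the paper is justified in omitting the details; I would state the marginal claim and point to that proof. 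No fraction-of-vectors-used statement is needed here, since $\tsc{Balancing}$ always assigns a nonzero sign.
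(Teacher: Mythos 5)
Your proof is correct and follows exactly the route the paper intends: the paper explicitly states that the proof of this theorem is identical to that of \cref{thm:partial-coloring}, with \cref{lem:gaussian-ramanujan} playing the role of \cref{lem:gaussian-jacobi}, and your marginal-plus-union-bound argument (including the correct observation that no independence across times is needed) is that proof. Your added care about the $n$-dimensional reduction to the one-dimensional walk is a detail the paper leaves implicit, but it is handled correctly.
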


All three algorithmic procedures are online.

\subsection{Computational details}\label{sub:computational-details}
In the previous section the above idealized algorithms ignored the cost of computing $r_\sigma(f)$ and $p_\sigma(n\pm f)$ to sufficient precision in order to be used for algorithmic purposes. The key claim is that one can approximate the above sums within $\delta$ in $\on{poly}(\log(\sigma/\delta))$-time.

In order to do so first note that we can truncate the sums $p_\sigma(n\pm f)$ and $r_\sigma(f)$ to values of $j\ge 1$ where $(j^2+2(n\pm f)j)/(2\sigma^2) = O(\log(\sigma/\delta))$. We now note that
\[\bigg|e^x-\sum_{j=0}^m\frac{x^j}{j!}\bigg|\le\frac{x^{m+1}}{(m+1)!}e^{\max(0,x)},\]
so taking $m = \Theta(\log(\sigma/\delta))$ gives a very good approximation to $\exp(-(j^2+2(n\pm f)j)/(2\sigma^2))$ in the range of terms considered. Now we can compute the desired sums by interpreting it as a sum of low degree (i.e. $O(\log(\sigma/\delta))$) polynomials on a sequence of integers, which can be evaluated quickly.

In the implementation of the algorithms above, at time $t$ if we are given a vector shorter than $1/(2t^2)$, we deterministically add it but ignore it for the purposes of maintaining a Gaussian distribution. These vectors have total length at most $1$, so contribute only $O(1)$ discrepancy in each coordinate. For the remaining vectors, we have $\sigma\le 2t^2$. We thus can approximate the relevant probabilities to within $\delta/(2t^2)$ efficiently, and then sample appropriately. This will preserve the Gaussians in question up to total variation distance of at most
\[\sum_{t\ge 1}\frac{\delta}{2t^2}\le\delta.\]
Therefore, the running time of all probability computations is $\on{poly}(\log (t/\delta))$ at time $t$. Thus the modified versions of the algorithms in \cref{thm:partial-coloring,thm:balancing,thm:full-coloring} run in $O\left(t\on{poly}(\log(t/\delta))+n+\on{nnz}(\{v_i\}_{i\in [t]})\right)$ time with discrepancy guarantees that are an absolute multiplicative factor worse. (The second term arises due to sampling the initial Gaussian point.) This running time essentially matches (up to logarithmic factors) the results of \cite{ALS20} and make progress towards input-sparsity time algorithms for discrepancy, a direction suggested by \cite{DadWeb}.

A variant of our algorithms which run in $O\left(t\on{poly}(\log(t/\delta))+n\log t+\on{nnz}(\{v_i\}_{i\in [t]})\right)$ time is achieved by ``disregarding vectors'' at time $t$ which are shorter than $1/(2t^2)$ (as above) and otherwise grouping vectors by length into dyadic scales and running the algorithms separately with independent randomness on each of the scales. Note that when vector lengths are forced to live in a dyadic scale then sampling an appropriate Gaussian leads us to compute the above probabilities only when $\sigma\in[1,2]$ and hence directly evaluation of the series is efficient.

\section*{Acknowledgements}
The authors thank Ryan Alweiss, Arun Jambulapati, Allen Liu, and Mark Sellke for earlier discussions on this topic. Furthermore we thank Ghaith Hiary for discussions regarding computing theta series.

\bibliographystyle{amsplain0.bst}
\bibliography{main.bib}

\providecommand{\bysame}{\leavevmode\hbox to3em{\hrulefill}\thinspace}
\providecommand{\MR}{\relax\ifhmode\unskip\space\fi MR }
\providecommand{\MRhref}[2]{%
  \href{http://www.ams.org/mathscinet-getitem?mr=#1}{#2}
}
\providecommand{\href}[2]{#2}
\begin{thebibliography}{10}

\bibitem{ALS20}
Ryan Alweiss, Yang~P Liu, and Mehtaab Sawhney, \emph{Discrepancy minimization
  via a self-balancing walk}, arXiv:2006.14009.

\bibitem{And65}
George~E. Andrews, \emph{A simple proof of {J}acobi's triple product identity},
  Proc. Amer. Math. Soc. \textbf{16} (1965), 333--334.

\bibitem{ANSV18}
Juhan Aru, Bhargav Narayanan, Alex Scott, and Ramarathnam Venkatesan,
  \emph{Balancing sums of random vectors}, Discrete Anal. (2018), Paper No. 4,
  17.

\bibitem{Ban98}
Wojciech Banaszczyk, \emph{Balancing vectors and {G}aussian measures of
  n-dimensional convex bodies}, Random Structures \& Algorithms \textbf{12}
  (1998), 351--360.

\bibitem{Ban10}
Nikhil Bansal, \emph{Constructive algorithms for discrepancy minimization},
  51th Annual {IEEE} Symposium on Foundations of Computer Science, {FOCS} 2010,
  October 23-26, 2010, Las Vegas, Nevada, {USA}, {IEEE} Computer Society, 2010,
  pp.~3--10.

\bibitem{BDG16}
Nikhil Bansal, Daniel Dadush, and Shashwat Garg, \emph{An algorithm for
  {K}oml{\'{o}}s conjecture matching {B}anaszczyk's bound}, {IEEE} 57th Annual
  Symposium on Foundations of Computer Science, {FOCS} 2016, 9-11 October 2016,
  Hyatt Regency, New Brunswick, New Jersey, {USA} (Irit Dinur, ed.), {IEEE}
  Computer Society, 2016, pp.~788--799.

\bibitem{BDGL18}
Nikhil Bansal, Daniel Dadush, Shashwat Garg, and Shachar Lovett, \emph{The
  gram-schmidt walk: a cure for the {B}anaszczyk blues}, Proceedings of the
  50th Annual {ACM} {SIGACT} Symposium on Theory of Computing, {STOC} 2018, Los
  Angeles, CA, USA, June 25-29, 2018 (Ilias Diakonikolas, David Kempe, and
  Monika Henzinger, eds.), {ACM}, 2018, pp.~587--597.

\bibitem{BG17}
Nikhil Bansal and Shashwat Garg, \emph{Algorithmic discrepancy beyond partial
  coloring}, Proceedings of the 49th Annual ACM SIGACT Symposium on Theory of
  Computing, 2017, pp.~914--926.

\bibitem{BJMSS20}
Nikhil Bansal, Haotian Jiang, Raghu Meka, Sahil Singla, and Makrand Sinha,
  \emph{Online discrepancy minimization for stochastic arrivals}, Proceedings
  of the 2021 {ACM-SIAM} Symposium on Discrete Algorithms, {SODA} 2021, Virtual
  Conference, January 10 - 13, 2021 (D{\'{a}}niel Marx, ed.), {SIAM}, 2021,
  pp.~2842--2861.

\bibitem{BJSS20}
Nikhil Bansal, Haotian Jiang, Sahil Singla, and Makrand Sinha, \emph{Online
  vector balancing and geometric discrepancy}, Proceedings of the 52nd Annual
  ACM SIGACT Symposium on Theory of Computing (New York, NY, USA), STOC 2020,
  Association for Computing Machinery, 2020, p.~1139–1152.

\bibitem{BS19}
Nikhil Bansal and Joel~H Spencer, \emph{On-{L}ine {B}alancing of {R}andom
  {I}nputs}, arXiv:1903.06898.

\bibitem{DadWeb}
Daniel Dadush,
  \url{https://homepages.cwi.nl/~dadush/workshop/discrepancy-ip/open-problems.html}.

\bibitem{DNTT18}
Daniel Dadush, Aleksandar Nikolov, Kunal Talwar, and Nicole Tomczak-Jaegermann,
  \emph{Balancing vectors in any norm}, 2018 IEEE 59th Annual Symposium on
  Foundations of Computer Science (FOCS), IEEE, 2018, pp.~1--10.

\bibitem{ES18}
Ronen Eldan and Mohit Singh, \emph{Efficient algorithms for discrepancy
  minimization in convex sets}, Random Struct. Algorithms \textbf{53} (2018),
  289--307.

\bibitem{LM15}
Shachar Lovett and Raghu Meka, \emph{Constructive discrepancy minimization by
  walking on the edges}, {SIAM} J. Comput. \textbf{44} (2015), 1573--1582.

\bibitem{Rot17}
Thomas Rothvoss, \emph{Constructive discrepancy minimization for convex sets},
  SIAM Journal on Computing \textbf{46} (2017), 224--234.

\bibitem{Spe85}
Joel Spencer, \emph{Six standard deviations suffice}, Trans. Amer. Math. Soc.
  \textbf{289} (1985), 679--706.

\end{thebibliography}
\appendix
\section{Proof of \texorpdfstring{\cref{lem:inequality}}{Lemma 3.1}}\label{sec:appendix}
\begin{proof}[Proof of \cref{lem:inequality}]
First note $r_\sigma(f)\le r_\sigma(0)$ and $r_\sigma(0)\le r_1(0)$ follow immediately from the Jacobi triple product identity \cref{eq:jacobi-triple-product} and nonnegativity. Therefore it suffices to prove that
\[p_\sigma(1+f)\ge\exp(1/\sigma^2)r_1(0)\]
for all $\sigma\ge 1$ and $f\in[-1/2,1/2]$.

First suppose that $\sigma\in[1,2]$. Then since $p_\sigma(1+f)$ is an alternating series with decreasing terms,
\[p_\sigma(1+f)\ge\exp\bigg(-\frac{1+2(1+f)}{2\sigma^2}\bigg)-\exp\bigg(-\frac{2+2(1+f)}{\sigma^2}\bigg).\]
Fixing $\sigma$, the right has derivative
\[-\frac{1}{\sigma^2}\exp\bigg(-\frac{1+2(1+f)}{2\sigma^2}\bigg)+\frac{2}{\sigma^2}\exp\bigg(-\frac{2+2(1+f)}{\sigma^2}\bigg),\]
which we can check is positive for $f$ underneath some cutoff and negative above this cutoff. Therefore the earlier expression is minimized over $f\in[-1/2,1/2]$ at some $f\in\{\pm 1/2\}$. Then, numerical checking shows that for each case $f\in\{\pm 1/2\}$ the resulting expression is minimized on $\sigma\in\{1,2\}$ for similar reasons. We find the true minimum is at $f = 1/2$ and $\sigma = 1$, which gives
\[p_\sigma(1+f)\ge 0.12\ge er_1(0)\ge\exp(1/\sigma^2)r_1(0).\]

Now we suppose that $\sigma\ge 2$. Let $2k-1$ be the smallest odd integer larger than $\sigma-1-f$, which is clearly always a positive integer as $\sigma\ge 1$ and $f\le 1/2$. We know that $t\mapsto\exp(-t^2/(2\sigma^2))$ is convex for $t\ge\sigma$, hence $t\mapsto\exp(-(t^2+2(1+f)t)/(2\sigma^2))$ is certainly convex and decreasing for $t\ge\sigma-1-f$. Therefore the difference between the values at $j$ and $j+1$ is at least the difference between the values at $j+1$ and $j+2$ when $j\ge 2k-1$, yielding
\begin{align*}
p_\sigma(1+f) &= \sum_{j\ge 1}(-1)^{j-1}\exp\bigg(-\frac{j^2+2(1+f)j}{2\sigma^2}\bigg)\\
&\ge\sum_{j\ge 2k-1}(-1)^{j-1}\exp\bigg(-\frac{j^2+2(1+f)j}{2\sigma^2}\bigg)\\
&\ge\frac{1}{2}\bigg(\sum_{j\ge 2k-1}(-1)^{j-1}\exp\bigg(-\frac{j^2+2(1+f)j}{2\sigma^2}\bigg)+\sum_{j\ge 2k}(-1)^j\exp\bigg(-\frac{j^2+2(1+f)j}{2\sigma^2}\bigg)\bigg)\\
&= \frac{1}{2}\exp\bigg(-\frac{(2k-1)^2+2(1+f)(2k-1)}{2\sigma^2}\bigg)\\
&\ge\frac{1}{2}\exp\bigg(-\frac{(\sigma+1-f)^2+2(1+f)(\sigma+1-f)}{2\sigma^2}\bigg)\\
&\ge\frac{1}{2}\exp\bigg(-\frac{4\sigma^2+16\sigma+15}{8\sigma^2}\bigg)\ge\frac{1}{2}\exp(-71/32)\exp(1/\sigma^2)\\
&\ge 0.05\exp(1/\sigma^2)\ge\exp(1/\sigma^2)r_1(0).\qedhere
\end{align*}
\end{proof}

\end{document}